\newtheorem{lemma}{Lemma}
\newcommand{\nn}{\nonumber}
\def\ft#1#2{{\textstyle{\frac{\scriptstyle #1}{\scriptstyle #2} } }}
\def\fft#1#2{{\frac{#1}{#2}}}
\begin{document}
\title{The Kerr-Induced Superradiant Tricritical Point}

\author{Arash Azizi}
\email[]{sazizi@tamu.edu}
\author{Reed Nessler}
\affiliation{Texas A\&M University, College Station, TX 77843, U.S.A.}

\begin{abstract}
The interplay of interaction and dissipation in open quantum systems can forge phase transitions beyond conventional paradigms. In the canonical quantum Rabi model, we demonstrate that a photon-photon (Kerr) interaction transforms the celebrated superradiant phase transition, inducing a line that separates continuous, second-order behavior from discontinuous, first-order regimes. Strikingly, this is not a line of tricritical points. Instead, we prove that genuine tricriticality is an isolated phenomenon, emerging only at a single, mathematically exact dissipation rate: the ``sweet spot'' of $\kappa_t^2 = (\sqrt{13}-2)/3$. Our discovery pinpoints the precise and restrictive conditions for realizing a rare multicritical point in a foundational quantum optical system, identifying this critical dissipation as a unique gateway to a higher order of criticality.
\end{abstract}

\maketitle
\textit{Introduction.}---%
Quantum phase transitions (QPTs), driven by quantum fluctuations at zero temperature, are a cornerstone of modern many-body physics \cite{sachdev2011quantum, vojta2003quantum}. They mark qualitative changes between distinct ground states and have been observed across diverse platforms, from ultracold atomic gases \cite{greiner2002quantum} to correlated condensed matter systems.

A central arena for exploring collective quantum phenomena is light–matter interaction. The quantum Rabi model (QRM) \cite{Rabi1937} and its $N$-atom extension, the Dicke model \cite{dicke1954coherence}, predict a hallmark superradiant QPT \cite{emary2003quantum, Hwang2015Plenio}, which has been linked to exotic effects including topological phases \cite{Mivehvar2017Topological} and emergent quasicrystalline order \cite{Mivehvar2019Quasicrystalline}. Rapid advances in experimental platforms such as circuit QED \cite{Wallraff2004, Fink2009TC_cQED, forn2019ultrastrong, kockum2019ultrastrong} and trapped ions \cite{Leibfried2003, Zheng2025Experimental} have enabled unprecedented control, motivating studies of anisotropy \cite{Ye2025QRM}, lattice generalizations \cite{Xu2024next_nearest, Xu024Dicke_Dimer}, and nonequilibrium multistability \cite{Ge2024Nori, Mivehvar2024Dicke}.

Realistic quantum optical systems inevitably feature both dissipation and intrinsic nonlinearities. A key example is the Kerr nonlinearity, arising from photon–photon interactions, which is known to drive phenomena from optical bistability to quantum fluids of light \cite{drummond1980quantum, carusotto2013quantum}. The interplay of driving, dissipation, and nonlinearity produces a rich nonequilibrium landscape, including dissipative phase transitions (DPTs) \cite{fitzpatrick2017observation, rodriguez2017probing} and engineered criticality using squeezed baths \cite{Zhu2020Squeezed_Agarwal, zhang2025Timecrystal_Agarwal}. This raises a sharp question: how does a fundamental nonlinearity like Kerr modify the canonical superradiant QPT? While first-order transitions have been seen in related models \cite{Pei2024Squeezed_QRM}, the emergence of \emph{genuine} multicriticality—where continuous and discontinuous lines meet—remains largely unexplored \cite{Zhu2024Nonreciprocal_Nori, Lu2024tricriticality, Xu2019Tricritical}.

In this Letter, we address this question for the dissipative QRM with an intrinsic Kerr interaction. A mean-field analysis reveals a rich phase diagram where the transition can be either continuous (second-order) or discontinuous (first-order), with the latter being characterized by significant bistability and hysteresis. We show that the NP instability threshold is universal—independent of Kerr strength—yet the transition order is strongly Kerr-dependent.  Most strikingly, we identify a single, isolated tricritical point for attractive Kerr interaction at a specific dissipation rate, $\kappa_t \approx 0.732\,\omega_c$. Within a Landau-theory description of the transition—where the equation of state for the order parameter $n\!\equiv\!|\alpha|^2$ reads $\delta(n)=A\,n+B\,n^2+C\,n^3+\cdots$ with $\delta$ the reduced pump distance to threshold—the linear and quadratic coefficients vanish $(A=B=0)$ while $C>0$. Consequently, the scaling changes from $|\alpha|\!\propto\!\delta^{1/2}$ to the tricritical $|\alpha|\!\propto\!\delta^{1/6}$. Our results reveal a concrete mechanism to engineer the order of a fundamental light–matter transition and demonstrate that Kerr physics can induce genuine multicriticality in a canonical quantum-optical system.

\textit{Model and Universal Instability Threshold.}---%
We consider a quantum Rabi model with an intrinsic Kerr nonlinearity, described by the Hamiltonian $H/\hbar = \omega_c a^{\dagger} a + \omega_a S_z + \frac{\lambda}{\sqrt{N}}(S_{+} + S_{-})(a + a^{\dagger}) + K (a^{\dagger})^2 a^2$. The system is coupled to a zero-temperature bath, inducing cavity photon loss at a rate $\kappa$, governed by the Lindblad master equation $\dot{\rho}=-i[H, \rho]/\hbar+\mathcal{L}_{\text {cav}} \rho$, where $\mathcal{L}_{\text{cav}}\rho = \kappa(2a\rho a^\dagger - a^\dagger a\rho - \rho a^\dagger a)$.

In the thermodynamic limit ($N \gg 1$), the mean-field dynamics for the normalized observables $\alpha = \langle a \rangle / \sqrt{N}$ and $(X, Y, Z) = (\langle S_x \rangle, \langle S_y \rangle, \langle S_z \rangle) / N$ govern the system. The equations admit a trivial normal phase (NP) solution ($\alpha=0, Z=-1/2$), which becomes unstable when the coupling $\lambda$ exceeds a critical value $\lambda_c$:
\begin{equation}
\frac{4\lambda_c^2}{\omega_a\omega_c} = 1 + \left(\frac{\kappa}{\omega_c}\right)^2\,.
\label{eq:threshold}
\end{equation}
Remarkably, this instability threshold is \textbf{universal}, being completely independent of the Kerr nonlinearity $U \equiv 2KN/\omega_c$. This universality arises because the Kerr term, being of higher order in the fields, does not contribute to the linearized dynamics around the vacuum state. As shown in Fig.~\ref{fig:quantum_fluctuations}, the quantum fluctuations $\langle \delta\alpha^\dagger \delta\alpha \rangle$ in the normal phase diverge precisely at this mean-field boundary, signaling the onset of the phase transition. While the onset of superradiance is Kerr-independent, its character is not.

\begin{figure*}[t]
    \centering
    \includegraphics[width=\textwidth]{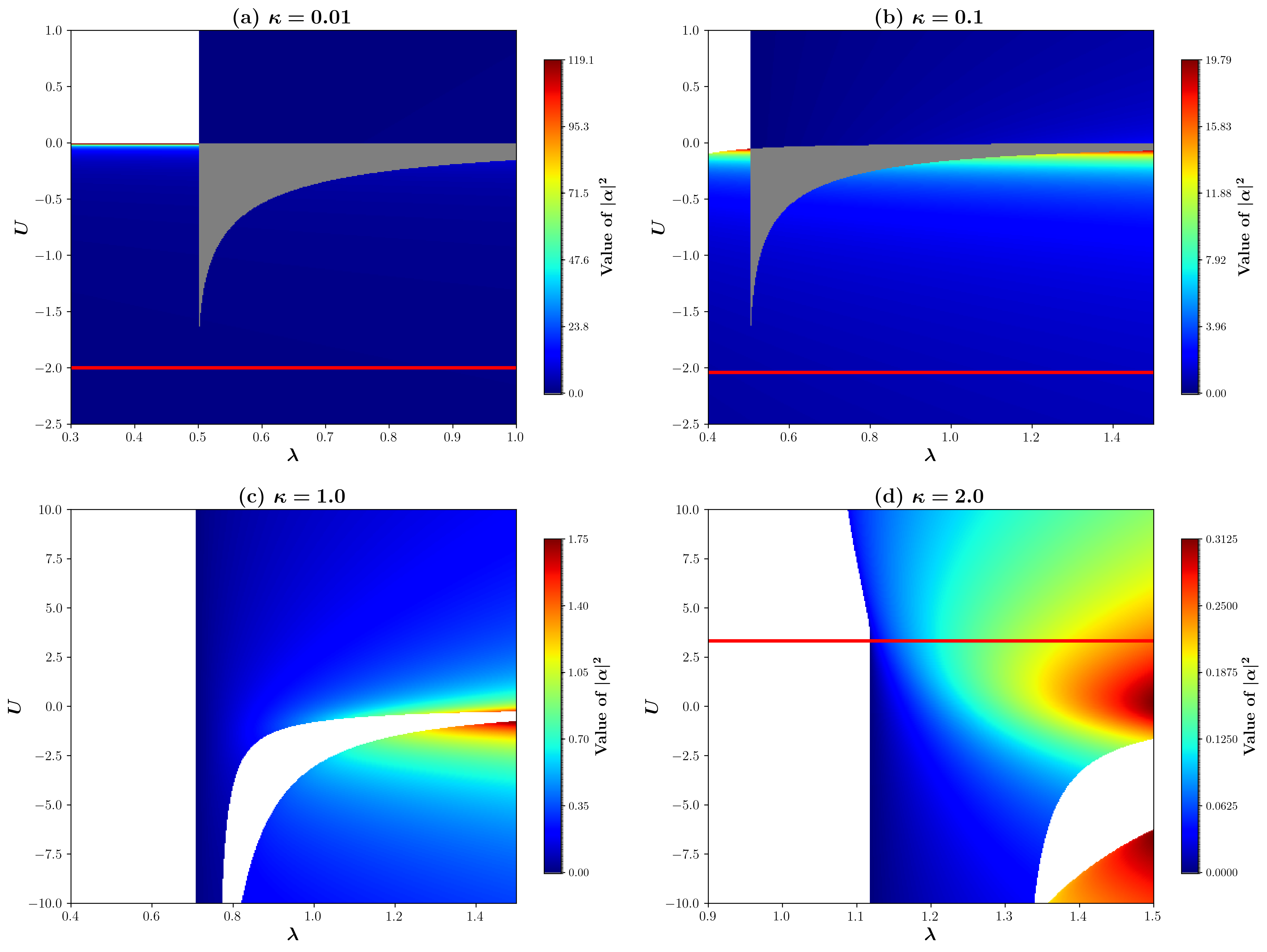}
    \caption{The squared magnitude of the order parameter $|\alpha|^2$ is plotted versus the light-matter coupling $\lambda$ and the Kerr nonlinearity $U$ for four different dissipation rates $\kappa$. The color scale indicates the magnitude. The red line marks $U=U_c(\kappa)$, separating continuous and discontinuous transitions.}
    \label{fig:alpha_abs_sq}
\end{figure*}

\textit{Quantum Fluctuations at the Critical Point.}---%
Beyond the semi-classical stability analysis, we can provide a quantum-level confirmation of the critical point by analyzing the fluctuations around the normal phase. By linearizing the quantum Langevin equations around the NP steady state and solving the associated Lyapunov equation, we derive an exact analytical expression for the steady-state photon number fluctuations $\langle \delta\alpha^\dagger \delta\alpha \rangle$ (see SM for the full derivation). The result is:
\begin{equation}
\langle \delta\alpha^\dagger \delta\alpha \rangle
=\frac{1}{2N}\,
\frac{\lambda^{2}}{\displaystyle \omega_a\omega_c-\frac{4\lambda^{2}}{1+\kappa^{2}/\omega_c^{2}}}\,.
\label{eq:fluctuations}
\end{equation}
Crucially, the denominator of this expression vanishes precisely when the condition for the universal instability threshold, Eq.~\eqref{eq:threshold}, is met. This divergence is the quintessential signature of an approaching second-order phase transition, where critical slowing down leads to an amplification of quantum noise. This behavior is visualized in Fig.~\ref{fig:quantum_fluctuations}, which shows the sharp divergence of fluctuations along the entire critical boundary in the $(\lambda, \kappa)$ plane for the case without Kerr nonlinearity ($U=0$). This analysis thus confirms the location of the universal threshold and reinforces its interpretation as a continuous phase transition driven by quantum fluctuations.

\begin{figure}[t]
    \centering
    \includegraphics[width=\columnwidth]{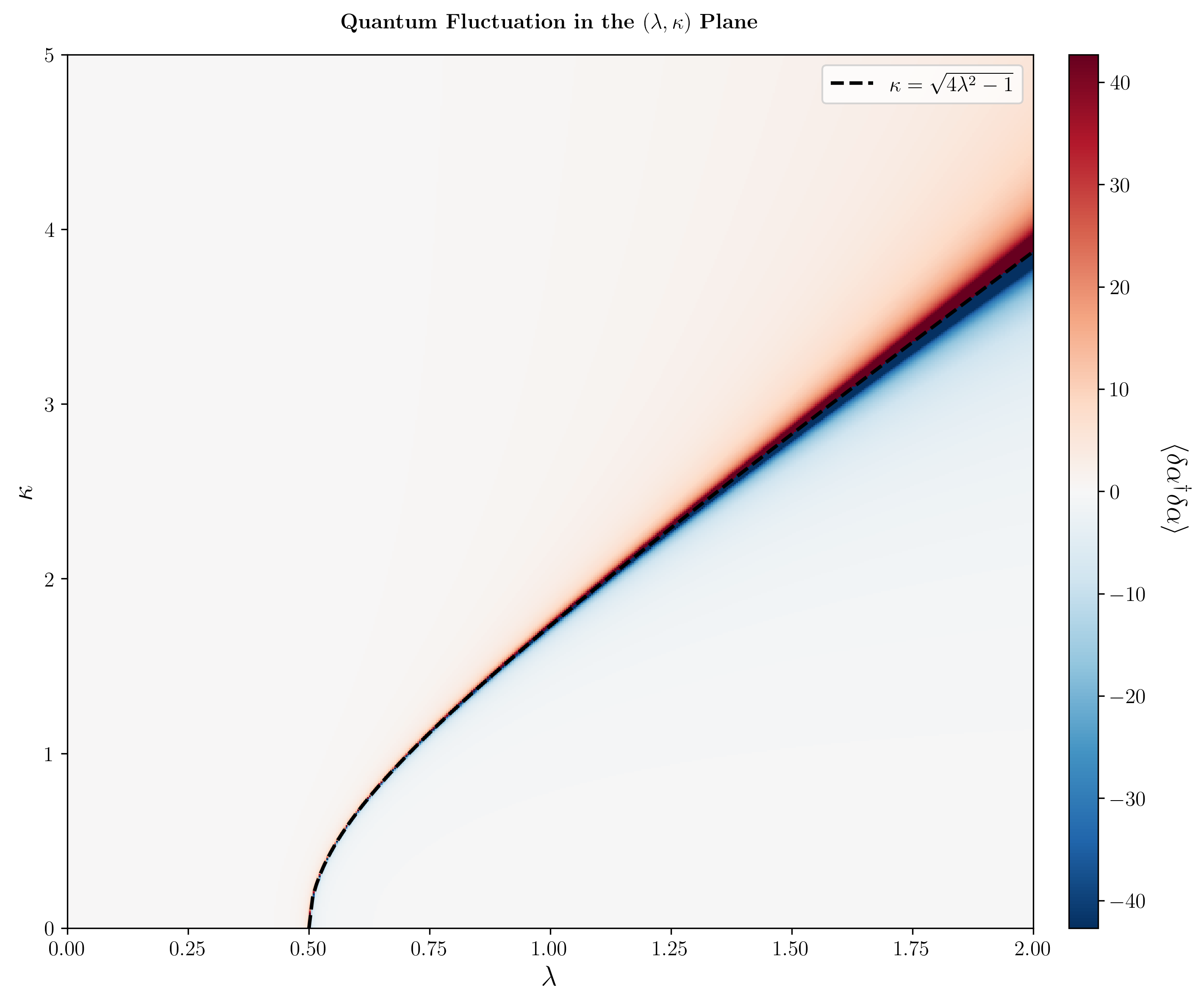}
    \caption{Quantum fluctuation $\langle \delta\alpha^\dagger \delta\alpha \rangle$ in the $(\lambda, \kappa)$ plane for $U=0$. The dashed line marks the mean-field critical point. The divergence of fluctuations signals the onset of the superradiant phase.}
    \label{fig:quantum_fluctuations}
\end{figure}

\textit{Landau Theory and Phase Diagram.}---%
To classify the nature of the transition, we develop a Landau-like theory based on the system's steady-state solutions. We define the order parameter as the photon number $n \equiv |\alpha|^2$ and a control parameter, $\delta \equiv 4\lambda^2 - \omega_a\omega_c(1+\kappa^2/\omega_c^2)$, measuring the deviation from the universal threshold. Near the critical point ($\delta, n \approx 0$), the equation of state can be expanded in a Taylor series:
\begin{equation}
\delta(n) = A n + B n^2 + C n^3 + \mathcal{O}(n^4)\,.
\end{equation}
The signs of the Landau coefficients determine the transition's order and stability. On resonance ($\omega_a=\omega_c=1$), a direct expansion yields (see SM):
\begin{align}
A &= 2(1+\kappa^2) - U(\kappa^2 - 1)\,, \\
B &= \kappa^2 U^2 + 4U + 2(1+\kappa^2)\,, \\
C &= U(-\kappa^2 U^2 + 2U + 2\kappa^2 + 6)\,.
\end{align}
The sign of the leading coefficient, $A$, dictates the nature of the bifurcation. For $A>0$, a stable superradiant solution emerges continuously (a second-order transition). For $A<0$, the emerging solution is unstable, characteristic of a discontinuous, first-order transition accompanied by bistability. This is rigorously confirmed by a full Routh-Hurwitz stability analysis, which shows that the leading-order stability criterion for an emerging SR solution is precisely $A>0$.

The boundary between these two regimes occurs when $A=0$, which defines a critical Kerr nonlinearity $U_c(\kappa) = 2(\kappa^2+1)/(\kappa^2 - 1)$. This equation describes the order-change line in the phase diagram in the panel a of Fig.~\ref{fig:main_result_combined}, separating the first-order (red) and second-order (blue) regions. This theoretical framework is visually confirmed in Fig.~\ref{fig:alpha_abs_sq}, which plots the order parameter versus $\lambda$ and $U$. The red dashed line, representing $U_c(\kappa)$, clearly separates regions of continuous growth from those with abrupt, discontinuous jumps. As illustrated in Fig.~\ref{fig:transition_behavior}, these first-order transitions are accompanied by characteristic hysteresis loops, where the system's state depends on the direction of the parameter sweep.

\begin{figure}[t]
    \centering
    \includegraphics[width=\columnwidth]{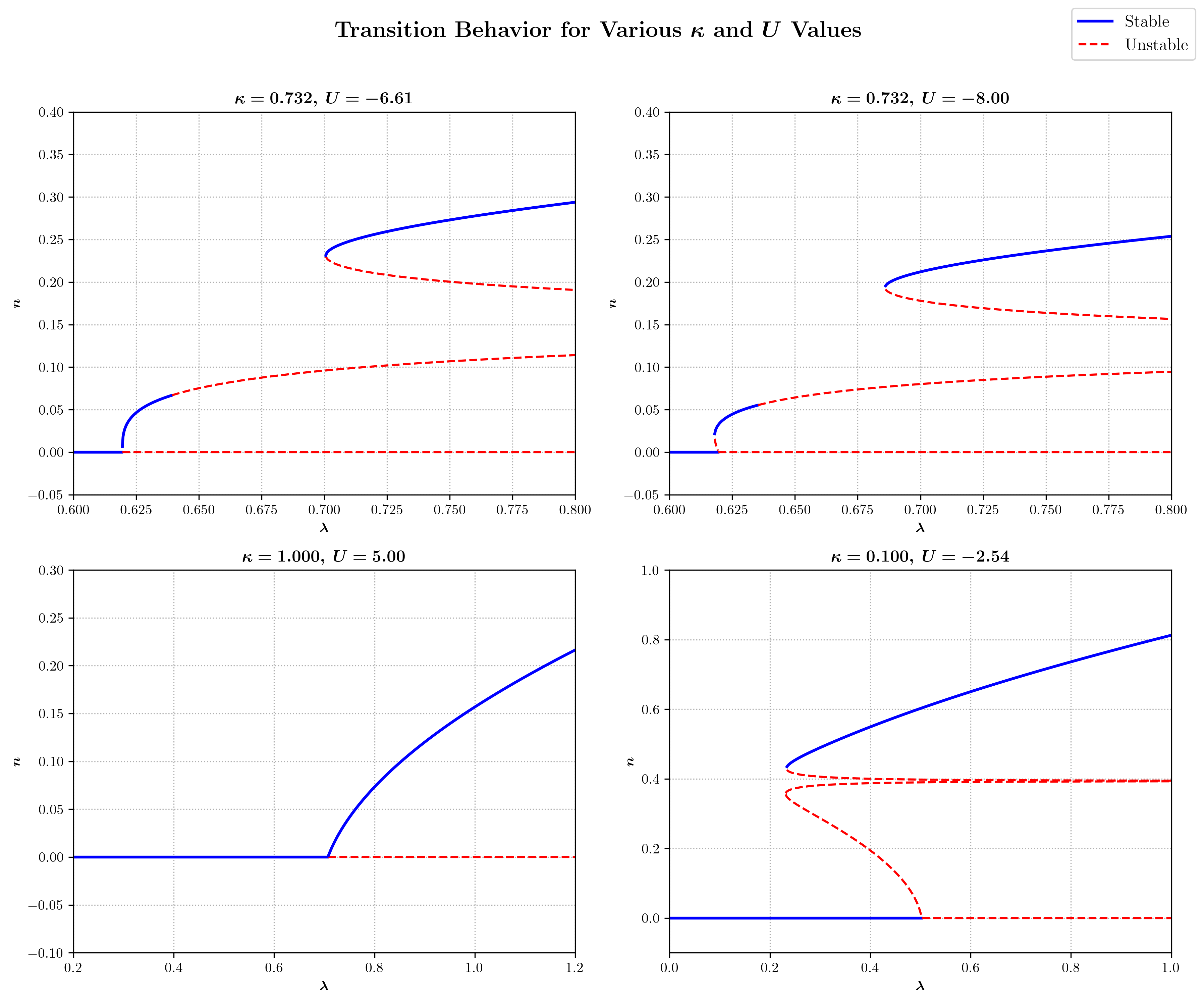}
    \caption{Transition behavior (photon number $n=|\alpha|^2$ vs. $\lambda$) for various $\kappa$ and $U$ values. Solid blue lines indicate stable steady states, while dashed red lines represent unstable ones. The top-left panel shows behavior near the tricritical point. Hysteresis loops, characteristic of first-order transitions, are evident in the top-right and bottom-right panels.}
    \label{fig:transition_behavior}
\end{figure}

\textit{The Isolated Tricritical Point.}---%
A higher-order critical point, a \textbf{tricritical point (TCP)}, emerges if the transition changes its fundamental character, requiring the simultaneous vanishing of the first two Landau coefficients, $A=0$ and $B=0$. Substituting the condition $U=U_c(\kappa)$ into the expression for $B$ yields a unique condition on $\kappa$:
\begin{equation}
3\kappa_t^4 + 4\kappa_t^2 - 3 = 0\,.
\end{equation}
Solving for the physically relevant root gives an exact, isolated value for the tricritical dissipation rate:
\begin{equation}
\kappa_t^2 = \frac{\sqrt{13}-2}{3} \approx 0.535 \implies \kappa_t \approx 0.732\,.
\end{equation}
The corresponding attractive Kerr strength is $U_t = U_c(\kappa_t) \approx -6.61$. At this mathematically precise point $(\kappa_t, U_t)$, marked by a star  in the panel a of Fig.~\ref{fig:main_result_combined}, the system exhibits a genuine TCP. This finding reveals that true tricriticality is not a generic feature of the order-change line but an isolated phenomenon accessible only at a critical dissipation ``sweet spot''.

At the TCP, where $A=B=0$, the equation of state is governed by the next non-zero term, $\delta \approx C n^3$. This implies a distinct critical scaling for the order parameter: $n \propto \delta^{1/3}$. Since the order parameter amplitude is $|\alpha| = \sqrt{n}$, its scaling becomes $|\alpha| \propto \delta^{1/6}$. This unique exponent contrasts sharply with the standard mean-field scaling $|\alpha| \propto \delta^{1/2}$ found along the second-order line. The unique geometry of this higher-order transition is visualized in panel (b) of Fig.~\ref{fig:main_result_combined}, which shows the smooth ramp of the second-order transition and the sharp cliff of the first-order one converging at the TCP.

\begin{figure}[t]
    \centering
    \includegraphics[width=0.7\columnwidth]{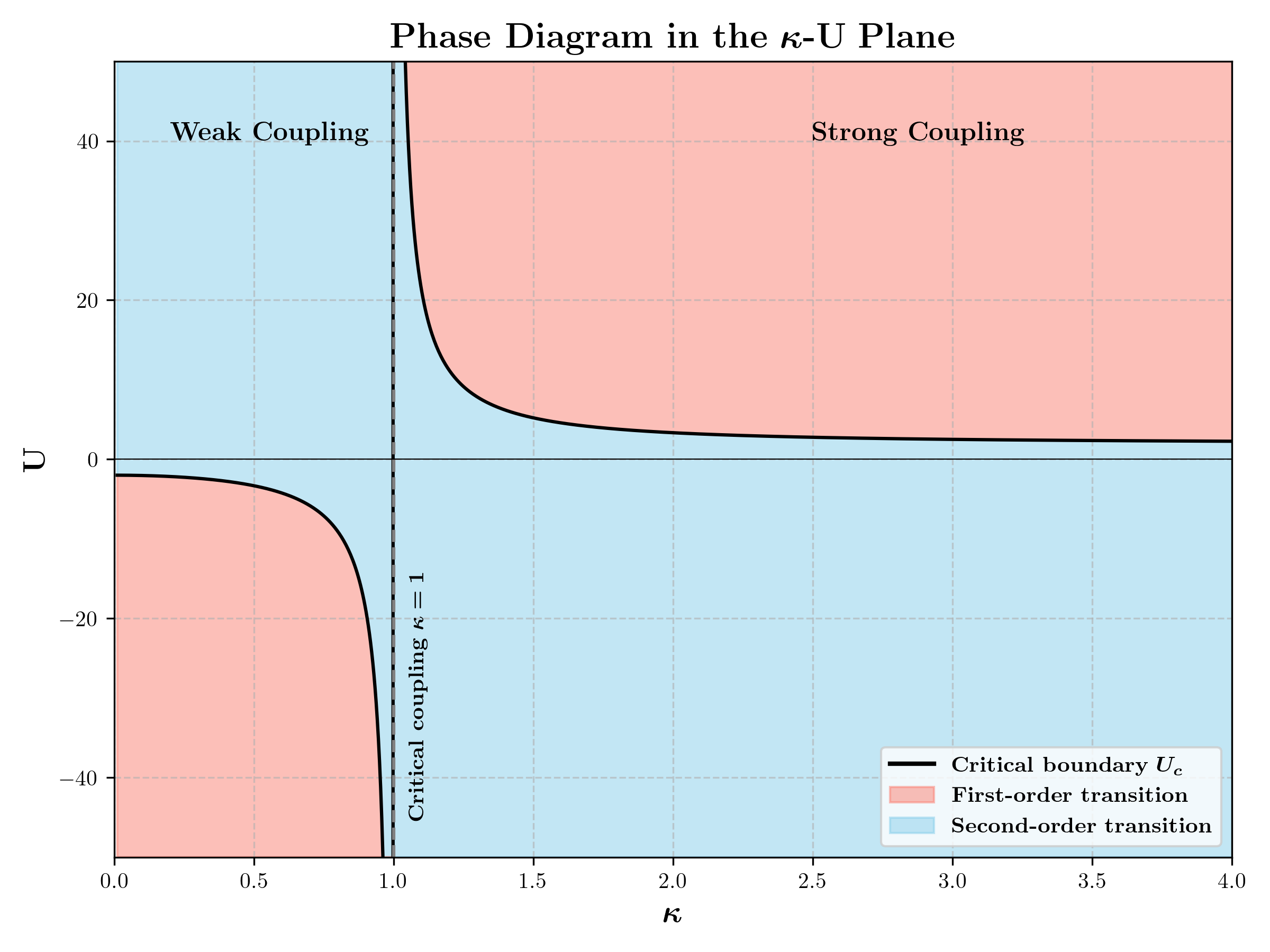}
    \vspace{0.3cm} 
    \includegraphics[width=0.7
    \columnwidth]{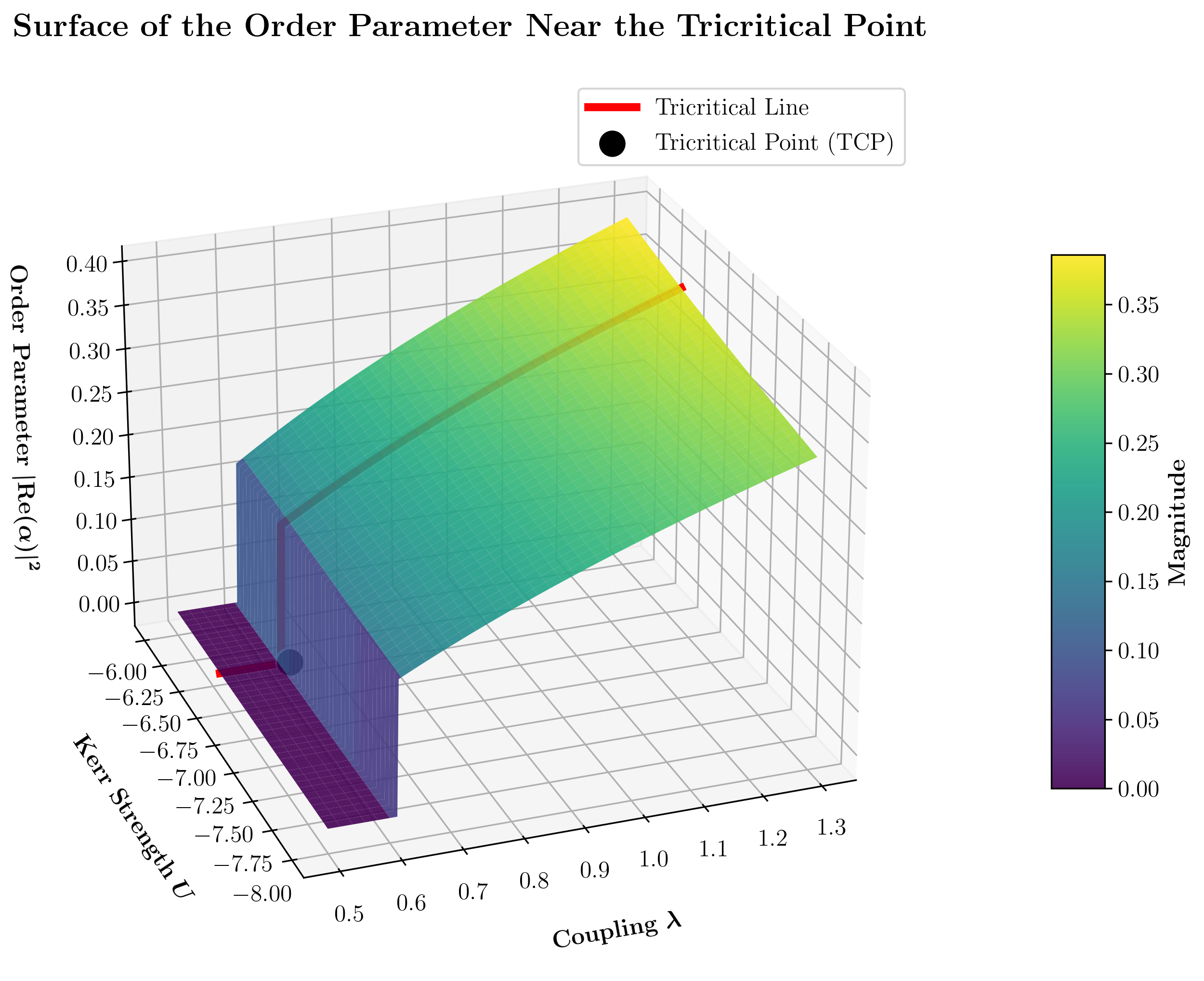}
    \caption{
        (a) Phase diagram in the $\kappa-U$ plane. The black line ($U_c$) separates first-order (red) from second-order (blue) transitions, meeting at the tricritical point (TCP, star).
        (b) Visualization of the order parameter surface at the tricritical dissipation $\kappa_t$. The transition from a smooth second-order ramp (for $U > U_t$) to a sharp first-order cliff (for $U < U_t$) is clearly visible, converging at the TCP.
    }
    \label{fig:main_result_combined} 
\end{figure}

\begin{table*}[t]
\caption{Classification of the Normal-to-Superradiant (NP$\to$SR) phase transition, controlled by the cavity decay rate $\kappa$ and Kerr nonlinearity $U$. The order is determined by the sign of the leading Landau coefficient $A$. The condition $A=0$ marks the boundary where the transition order changes. A true tricritical point requires $A=B=0$. We assume resonance ($\omega_a=\omega_c=1$).}
\centering
\renewcommand{\arraystretch}{1.5} 
\begin{tabular}{@{}lccc@{}}
\toprule
\textbf{Coupling Regime} & \textbf{Interaction Condition ($U$)} & \textbf{Stability ($A$ Sign)} & \textbf{Transition Character} \\
\midrule

\multirow{2}{*}{\begin{tabular}[c]{@{}l@{}}\textbf{Strong Dissipation} \\ ($\kappa > 1$)\end{tabular}}
    & $U < U_c(\kappa)$ (Attractive) & $A>0$ & Second-order (Continuous) \\
    & $U > U_c(\kappa)$ (Repulsive) & $A<0$ & First-order (Discontinuous) \\
\midrule

\multirow{3}{*}{\begin{tabular}[c]{@{}l@{}}\textbf{Weak Dissipation} \\ ($\kappa < 1, \kappa \neq \kappa_t$)\end{tabular}}
    & $U > U_c(\kappa)$ (Attractive) & $A>0$ & Second-order (Continuous) \\
    & $U < U_c(\kappa)$ (Attractive) & $A<0$ & First-order (Discontinuous) \\
    & $U \ge 0$ (Repulsive)         & $A>0$ & Second-order (Continuous) \\
\midrule

\multirow{3}{*}{\begin{tabular}[c]{@{}l@{}}\textbf{Tricritical Dissipation} \\ ($\kappa = \kappa_t$)\end{tabular}}
    & $U > U_t$ & $A>0, B>0$ & Second-order (Continuous) \\
    & $U < U_t$ & $A<0, B>0$ & First-order (Discontinuous) \\
    & $U = U_t$ & $A=0, B=0$ & \textbf{Genuine Tricritical Point} \\
\midrule

\begin{tabular}[c]{@{}l@{}}\textbf{Critical Dissipation} \\ ($\kappa = 1$)\end{tabular}
    & Any $U$ & Always $A>0$ & Second-order (Continuous) \\
\bottomrule

\multicolumn{4}{p{0.9\linewidth}}{\rule{0pt}{2.5ex}\textbf{Note:} The order-change line is defined by $U_c(\kappa) = 2(\kappa^2+1)/(\kappa^2-1)$. A genuine TCP occurs only at the specific dissipation rate $\kappa_t^2 = (\sqrt{13}-2)/3 \approx 0.535$ with the corresponding interaction strength $U_t = U_c(\kappa_t) \approx -6.61$.} \\

\end{tabular}
\label{tab:phase_transition_summary}
\end{table*}

\textit{Conclusion and Experimental Outlook.}---%
In this Letter, we have shown that an intrinsic Kerr nonlinearity can fundamentally restructure the canonical superradiant phase transition, producing an isolated tricritical point at a mathematically exact dissipation rate. Our analysis, summarized in Table~\ref{tab:phase_transition_summary}, reveals that while the normal-phase instability threshold is universal and Kerr-independent, the order of the ensuing normal-to-superradiant transition is fully controlled by the interplay between Kerr nonlinearity and cavity dissipation.

The key physical ingredients for observing this phenomenon—ultrastrong light–matter coupling, tunable Kerr nonlinearity, and engineered dissipation—are all within reach of state-of-the-art experimental platforms. While the specific attractive Kerr strength required for the tricritical point ($U_t \approx -6.61$) is demanding with current technologies, the rapid progress in engineering strong nonlinearities, particularly in superconducting circuits, suggests such parameter regimes may become accessible in the future. Circuit QED systems remain the most promising platform, as they naturally combine large coupling strengths \cite{Wallraff2004,niemczyk2010circuit,yoshihara2017superconducting,forn2019ultrastrong,kockum2019ultrastrong}, in situ Kerr engineering via Josephson elements \cite{rebic2009giant,frattini2017ultrastrong}, and precise control of $\kappa$ through input–output design. The maturity of these platforms for probing criticality is highlighted by recent experiments demonstrating the quantum Rabi phase transition in other systems \cite{Cai2021Single_trapped_ion, Zhang2021Observation} and spontaneous symmetry breaking in superconducting circuits \cite{Zheng2025Experimental}.

An experimental roadmap to observe the predicted tricriticality would begin with mapping the NP$\to$SR phase boundary across several engineered dissipation rates $\kappa$, while monitoring the transition’s nature through hysteresis measurements. At the predicted tricritical ``sweet spot,'' $\kappa_t \approx 0.732\,\omega_c$, which for a typical cavity frequency $\omega_c/2\pi = 5~\mathrm{GHz}$ corresponds to $\kappa_t/2\pi \approx 3.66~\mathrm{GHz}$, the scaling of the photon amplitude is expected to cross over from $|\alpha| \propto \delta^{1/2}$ to the unique tricritical scaling $|\alpha| \propto \delta^{1/6}$. This combination of a universal instability threshold and Kerr–controlled transition order renders the phenomenon robust to microscopic details, making it a compelling and experimentally accessible target for future investigation.

\textit{Acknowledgments.}---%
We are grateful to Girish Agarwal for his foundational contributions in initiating this work and for valuable discussions at the project's outset.  This work was supported by the Robert A. Welch Foundation (Grant No. A-1261) and the National Science Foundation (Grant No. PHY-2013771).

\bibliographystyle{apsrev4-2}
\bibliography{QRM_KerrRef}


\newpage
\clearpage
\onecolumngrid
\appendix
\section*{Supplemental Material:  ``The Kerr-Induced Superradiant Tricritical Point''}

\setcounter{equation}{0}
\renewcommand{\theequation}{S\arabic{equation}}

\section{Master equation}
We consider the quantum Rabi model with Kerr non-linearity. The Hamiltonian is
\begin{align}
& H=H_0+H_1 \nn\\
& H_0=\hbar \omega_c a^{\dagger} a+\hbar \omega_a S_z
+\frac{\hbar \lambda}{\sqrt{N}}\left(S_{+}+S_{-}\right)\left(a+a^{\dagger}\right) \nn\\
& H_1= \hbar K (a^{\dagger})^2 a^2\,,
\end{align}
where $H_0$ is the quantum Rabi Hamiltonian, and $H_1$ represents the Kerr non-linearity.
 The master equation reads
\begin{equation}
\frac{d \rho}{d t}=-\frac{i}{\hbar}[H, \rho]+\mathcal{L}_{\text {cav }} \rho+\mathcal{L}_{\mathrm{A}} \rho\,,
\end{equation}
where $\mathcal{L}_{\text {cav}}$ and $\mathcal{L}_{\text {A}}$ denote cavity and atomic dissipation, respectively. The time derivative of any operator can be found as
\begin{align}
\frac{d}{d t}\langle {\cal O}\rangle=&\frac{d}{d t} \operatorname{tr}({\cal O} \rho) = \operatorname{tr}({\cal O} \frac{d \rho}{d t})=
\operatorname{tr}\left[ {\cal O}\left(-\frac{i}{\hbar}[H, \rho]+\mathcal{L}_{\text {cav}} \rho +\mathcal{L}_{\text {A}} \rho\right) \right]
\nn\\
=&-\frac{i}{\hbar} \,\operatorname{tr}([{\cal O},H] \rho)
+\operatorname{tr}\big( {\cal O}\mathcal{L}_{\text {cav}} \rho \big)
+\operatorname{tr}\big( {\cal O}\mathcal{L}_{\text {A}} \rho \big)\,,
\end{align}
where we have used the cyclicity of the trace, as follows:
\begin{align}
\operatorname{tr}([A, B] C)=\operatorname{tr}([B, C] A)\,.
\end{align}
\subsection{Unitary terms}
We now derive the unitary contribution in the master equation. Using the cyclicity of the trace, $\operatorname{tr}([H, \rho] a)=\operatorname{tr}([a,H] \rho)$. Since
\begin{align}
& {[a, H]=\left[a, H_0\right]+\left[a, H_1\right]\,, \quad\left[a, H_0\right]=\frac{\partial}{\partial a^{\dagger}} H_0=\hbar \omega_c a+\frac{\hbar \lambda}{\sqrt{N}}\left(S_{+}+S_{-}\right)} \nn\\
& [a, H_1]=\hbar K\left[a, (a^{\dagger})^2 a^2 \right]=2 \hbar K a^{\dagger} a^2\,,
\end{align}
then,
\begin{equation}
{[a, H]=\hbar \omega_c a+\frac{\hbar \lambda}{\sqrt{N}}\left(S_{+}+S_{-}\right)}
+ 2 \hbar K a^{\dagger} a^2\,.
\end{equation}
The commutators of the spin operators with the Hamiltonian can be found from the standard commutation relations $\left[S_i, S_j\right]=i \varepsilon_{ijk} \,S_k$ as follows
\begin{align}
\left[S_x, H\right]=&-i \hbar \omega_a S_y\,,\nn\\
\left[S_y, H\right]=&i \hbar \omega_a S_x-\frac{2 i \hbar \lambda}{\sqrt{N}} S_z\left(a+a^{\dagger}\right)\,,\nn\\
\left[S_z, H\right]=&\frac{2 i \hbar \lambda}{\sqrt{N}} S_y\left(a+a^{\dagger}\right)\,.
\end{align}
\subsection{Cavity and atomic dissipation terms}
Here $\mathcal{L}_{\text {cav}}$ and $\mathcal{L}_{\text {A}}$ denote cavity and atomic dissipation, respectively. They are as follows
\begin{align}
\mathcal{L}_{\text {cav}}\, \rho =& \kappa\left(2 a \rho a^{\dagger}-a^{\dagger} a \rho-\rho a^{\dagger} a\right)\,, \nn\\
\mathcal{L}_{\text {A}}\, \rho =&\frac{\gamma}{N}\Big(2 S_{-} \rho S_{+}-S_{+} S_{-} \rho-\rho S_{+} S_{-}\Big)\,.
\end{align}
Note that both $\mathcal{L}_{\text {cav}}\, \rho$ and $\mathcal{L}_{\text {A}}\, \rho$ are defined such that $\operatorname{tr} \left(\mathcal{L}_{\text{cav}}\, \rho\right)=\operatorname{tr} \left(\mathcal{L}_{\text{A}}\, \rho\right)=0$. The following lemma is useful for calculating the dissipation terms.
\begin{lemma}
\begin{align}
 \boxed{\operatorname{tr}\left[\left(\mathcal{L}_A \, \rho\right) {\cal O}\right] =
\frac{\gamma}{N} \operatorname{tr}\Big[\Big(\left[S_{+}, {\cal O}\right] S_{-}
+S_{+} \left[{\cal O},S_{-}\right]\Big) \rho\Big]}\,.
\end{align}
\end{lemma}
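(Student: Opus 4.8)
The plan is to prove the identity by direct computation, using only the definition of the atomic dissipator $\mathcal{L}_A\rho = \frac{\gamma}{N}\big(2S_-\rho S_+ - S_+S_-\rho - \rho S_+S_-\big)$ together with the cyclicity of the trace. First I would substitute this definition into the left-hand side, so that $\operatorname{tr}\big[(\mathcal{L}_A\rho)\,\mathcal{O}\big]$ becomes $\tfrac{\gamma}{N}$ times a sum of three traces, one per Lindblad term, each carrying the test operator $\mathcal{O}$ on the right.

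The key step is then to rotate the density matrix $\rho$ into a common (say, rightmost) slot in all three traces via $\operatorname{tr}[ABC]=\operatorname{tr}[CAB]$. The gain term yields $2\operatorname{tr}[S_-\rho S_+\mathcal{O}]=2\operatorname{tr}[S_+\mathcal{O}S_-\,\rho]$, while the two loss terms become $-\operatorname{tr}[\mathcal{O}S_+S_-\,\rho]$ and $-\operatorname{tr}[S_+S_-\mathcal{O}\,\rho]$. Collecting them, the operator multiplying $\rho$ inside the trace is $2S_+\mathcal{O}S_- - \mathcal{O}S_+S_- - S_+S_-\mathcal{O}$.

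It remains to recognize this combination as the claimed commutator structure. Expanding the right-hand side of the lemma gives $[S_+,\mathcal{O}]S_- + S_+[\mathcal{O},S_-] = (S_+\mathcal{O}-\mathcal{O}S_+)S_- + S_+(\mathcal{O}S_--S_-\mathcal{O}) = 2S_+\mathcal{O}S_- - \mathcal{O}S_+S_- - S_+S_-\mathcal{O}$, which matches the collected operator term by term, completing the proof.

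There is no genuine obstacle here; the only thing to watch is the bookkeeping in the cyclic permutations, since rotating the wrong factor to the back produces a spurious operator ordering. A convenient internal consistency check is to set $\mathcal{O}=\mathbf{1}$: the right-hand side then vanishes because both commutators vanish, while the left-hand side reduces to $\operatorname{tr}(\mathcal{L}_A\rho)=0$, reproducing the trace-preservation property already noted for $\mathcal{L}_A$.
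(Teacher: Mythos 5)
Your proof is correct and follows essentially the same route as the paper's: substitute the definition of $\mathcal{L}_A\rho$, cyclically permute $\rho$ to a common slot in each trace, and identify $2S_+\mathcal{O}S_- - \mathcal{O}S_+S_- - S_+S_-\mathcal{O}$ with $[S_+,\mathcal{O}]S_- + S_+[\mathcal{O},S_-]$. The consistency check at $\mathcal{O}=\mathbf{1}$ is a nice addition, but the argument itself is the paper's verbatim.
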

\begin{proof}
\begin{align}
\operatorname{tr}\left[\left(\mathcal{L}_A \, \rho\right) {\cal O}\right] =&
\frac{\gamma}{N} \operatorname{tr}\Big[\left(2 S_{-} \rho S_{+}-S_{+} S_{-} \rho-\rho S_{+} S_{-}\right) {\cal O}\Big] \nn\\
=&\frac{\gamma}{N} \operatorname{tr}\Big[\Big(2 S_{+} {\cal O} S_{-}- {\cal O} S_{+} S_{-} -\ S_{+} S_{-}{\cal O}\Big) \rho\Big] \nn\\
=&
\frac{\gamma}{N} \operatorname{tr}\Big[\Big(\left[S_{+}, {\cal O}\right] S_{-}
+S_{+} \left[{\cal O},S_{-}\right]\Big) \rho\Big] \,.
\end{align}
\end{proof}
Henceforth we set $\gamma=0$ (no atomic dissipation) in the mean-field and stability analysis. The same relation holds for $\mathcal{L}_{\text {cav}}$, with $a$ and $a^\dagger$ as annihilation and creation operators, and $\kappa$ as the coupling. Therefore,
\begin{align}
\operatorname{tr}\left[\left(\mathcal{L}_{\text {cav}} \, \rho\right) a\right]=\kappa \operatorname{tr}\Big[\big([a^{\dagger}, a] a
+a^{\dagger}[ a, a]\big) \rho\Big]=
-\kappa \operatorname{tr}(a\rho)=-\kappa\langle a\rangle\,.
\end{align}
For cavity dissipation acting on spin operators,
\begin{equation}
\operatorname{tr}\Big[\left(\mathcal{L}_{\text {cav}} \rho\right) S_i\Big]=\operatorname{tr}\left(\mathcal{L}_{\text {cav}}\left(\rho S_i\right)\right)=0\,,
\end{equation}
since $[S_i, a]=\left[S_i, a^\dagger\right]=0$, and $\operatorname{tr}\left(\mathcal{L}_{\text {cav }}\sigma\right)=0$, for $\sigma=\rho S_i$.
Finally,
\begin{equation}
\boxed{\frac{d}{d t}\langle a\rangle
=-\frac{i}{\hbar} \langle [a, H] \rangle - \kappa \langle a\rangle}
\end{equation}
Since $S_{+}+S_{-}=2 S_x$,
 we have
 \begin{equation}
\frac{d \langle a\rangle}{d t}=-i \omega_c \langle a\rangle
-i \frac{\lambda}{\sqrt{N}} \langle S_{+}+S_{-} \rangle 
-2 i K \,\langle a\rangle^{*}\, \langle a\rangle^2-\kappa \langle a\rangle\,,
\end{equation}
where we have assumed $\langle a^{\dagger} a^2 \rangle=
\langle a\rangle^{*}\, \langle a\rangle^2$. It is convenient to normalize the mean values to eliminate the factor of $N$. Namely,
\begin{align}
\left\langle S_x\right\rangle=N X,\quad
\left\langle S_y\right\rangle=N Y,\quad
\left\langle S_z\right\rangle=N Z,\quad
\langle a\rangle=\sqrt{N} \alpha,\quad
 \alpha=\alpha_{\mathrm{Re}}+i \alpha_{\mathrm{Im}} .
 \end{align}
 Thus, the master equation yields the following mean-field equations for the photon and spin variables:
\begin{align}
 \frac{d \alpha}{d t}&=-i \omega_c \alpha-2 i \lambda X
 -2 i K N \abs{\alpha}^2\, \alpha - \kappa \alpha \\
 \frac{d X}{d t}&=-\omega_a Y, \quad \frac{d Y}{d t}=\omega_a X-2
 \lambda Z\left(\alpha+\alpha^*\right),
\quad \frac{d Z}{d t}=2 \lambda Y\left(\alpha+\alpha^*\right)
\end{align}
Alternatively, one may express the mean-field equations in terms of $Q$ and $P$, where
\begin{equation}
Q=\frac{1}{\sqrt{2}}\left(\alpha+\alpha^*\right)=\sqrt{2} \alpha_{\text{Re}}\,, \quad \quad P=\frac{i}{\sqrt{2}}\left(\alpha^*-\alpha\right)=\sqrt{2} \alpha_{\text{Im}}\,.
\end{equation}
Therefore,
\begin{align}
& \frac{d Q}{d t}=\omega_c P-\kappa Q+K N P\left(Q^2+P^2\right), \\
& \frac{d P}{d t}=-\omega_c Q -\kappa P -2 \sqrt{2} \lambda X-K N Q\left(Q^2+P^2\right), \\
& \frac{d X}{d t}=-\omega_a Y, \\
&\frac{d Y}{d t}=\omega_a X-2 \sqrt{2} \lambda Z Q, \\
&\frac{d Z}{d t}=2 \sqrt{2} \lambda Y Q\,. \label{SS}
\end{align}
\section{Solving the mean-field steady-state equations}
There is a trivial solution for the mean-field equations as follows:
\begin{align}
X=Y=P=Q=0\,, \qquad Z=\pm \ft12\,,
\end{align}
which is called the normal phase. Note that although both positive and negative values of $Z$ are solutions to the steady-state equations, the positive $Z$ is unstable.
In addition to the trivial phase mentioned above, there can be non-trivial solutions to (\ref{SS}), known as the superradiant phase, where the photon number is non-zero, i.e., $P\neq 0$ and $Q \neq 0$.
Assuming $\omega_c\neq 0$ and $\omega_a \neq 0$, the steady-state equations for the mean field (\ref{SS}) can be written as follows:
\begin{align}
P -\tilde{\kappa} Q+ \ft{U}2  P\left(Q^2+P^2\right)&=0\,, \nn\\
Q +\tilde{\kappa} P +2 \sqrt{2} \tilde{\lambda} X+ \ft{U}2  Q\left(Q^2+P^2\right)&=0\,, \nn\\
X-2 \sqrt{2}\, \fft{\tilde{\lambda}}{\beta}\, Z Q&=0\,, \nn\\
Y&=0\,,
\end{align}
where
\begin{align}
\tilde{\kappa}=\fft{\kappa}{\omega_c}\,, \qquad
\tilde{\lambda}=\fft{\lambda}{\omega_c}\,, \qquad
U=\fft{2KN}{\omega_c}\,, \qquad
\beta=\fft{\omega_a}{\omega_c}\,.
\end{align}
Also, $X^2+Y^2+Z^2=\frac{1}{4}$ yields $X^2+Z^2=\frac{1}{4}$.
Introducing the following parameter can be helpful:
\begin{align}
s=U \, \abs{\alpha}^2=\fft{U}2\,(Q^2+P^2)\,. \label{s}
\end{align}
\noindent\textit{From now on, $\kappa$ and $\lambda$ denote $\kappa/\omega_c$ and $\lambda/\omega_c$. Whenever we restore units, we explicitly write $\kappa \to \kappa/\omega_c$ and $\lambda \to \lambda/\omega_c$.}
 Thus, the steady-state equations for the mean field read
\begin{align}
P-\kappa Q+s P =&0\,, \label{P}\\
Q+2 \sqrt{2} \lambda X+s Q+\kappa P=&0\,, \label{Q} \\
X-2 \sqrt{2} \fft{\lambda}{\beta} Z Q=&0\,. \label{X}
\end{align}
Using (\ref{P}) and (\ref{X}) to express $P$ and $X$ in terms of $Q$ and $Z$ yields
\begin{equation}
P=\ft{\kappa}{1+s}\,Q, \quad \quad X=\frac{2 \sqrt{2} \lambda Q}{\beta} Z\,. \label{PQXZ}
\end{equation}
Moreover, substituting the above results into (\ref{Q}), and assuming $Q\neq 0$, yields
\begin{equation}
1+s+\frac{8\lambda^2}{\beta} Z + \frac{\kappa^2}{1+s}=0\,,
\end{equation}
or simply
\begin{equation}
Z=- \frac{\beta}{8 \lambda^2} \frac{(s+1)^2+\kappa^2}{s+1}\,. \label{Z}
\end{equation}

Now using (\ref{s}) and (\ref{PQXZ}) results in
\begin{equation}
Q^2+P^2=Q^2\left(1+\frac{\kappa^2}{(1+s)^2}\right)=\frac{2 s}{U}\,,
\end{equation}
and therefore,
\begin{equation}
Q^2=\frac{2s(s+1)^2}{U\left[(s+1)^2+\kappa^2\right]}\,, \qquad
\qquad
P^2=\frac{2s\kappa^2}{U\left[(s+1)^2+\kappa^2\right]}\,. \label{PQ2}
\end{equation}
Finally, using $X^2+Z^2=\frac{1}{4}$ results in
\begin{equation}
X^2+Z^2=Z^2\left(1+\frac{8 \lambda^2}{\beta^2} Q^2\right)
= Z^2\left(1+\frac{8 \lambda^2}{\beta^2} \,\frac{2s(s+1)^2}{U\left[(s+1)^2+\kappa^2\right]}\right)=\ft14\,,
\end{equation}
where we have used (\ref{PQXZ}) and (\ref{PQ2}). Now, plugging $Z$ from (\ref{Z}) into the above equation yields the following fifth-order polynomial equation for $s$:
\begin{equation}
\boxed{f(s)=\Big((s+1)^2+\kappa^2\Big)\left[\frac{1}{U} s(s+1)^2+\fft{\beta^2}{16 \lambda^2} \,\left((s+1)^2+\kappa^2\right)\right]-\lambda^2(s+1)^2=0}\,. \label{f}
\end{equation}
Expressed in terms of the original quantities appearing in the Hamiltonian, we have
\begin{align}
f(\abs{\alpha}^2)=&\Bigg[\Big(\fft{2KN}{\omega_c} \, \abs{\alpha}^2+1\Big)^2+(\fft{\kappa}{\omega_c})^2\Bigg] \nn\\
& \quad \times \Bigg\{\abs{\alpha}^2\Big(\fft{2KN}{\omega_c} \, \abs{\alpha}^2+1\Big)^2
+\fft{\omega_a^2}{16 \lambda^2} \,\left(\Big(\fft{2KN}{\omega_c} \, \abs{\alpha}^2+1\Big)^2+(\fft{\kappa}{\omega_c})^2\right)\Bigg\} \nn\\
&-(\fft{\lambda}{\omega_c})^2
\Big(\fft{2KN}{\omega_c} \, \abs{\alpha}^2+1\Big)^2=0\,.
\end{align}
\begin{equation}
\alpha_{\text{Re}}^2=\frac{1}{U} \frac{s(s+1)^2}{\left[(s+1)^2+\kappa^2\right]}\,, \qquad
\qquad
\alpha_{\operatorname{Im}}^2=\frac{\kappa^2}{U} \frac{s}{\left[(s+1)^2+\kappa^2\right]}
\end{equation}
One may normalize all parameters with respect to $\omega_c$, and also assume $\omega_c=\omega_a$. Namely,
\begin{align}
\omega_a=\omega_c=1\,, \quad \quad U=\ft{2KN}{\omega_c}=2KN\,,
\end{align}

\subsection{\texorpdfstring{From the SR-side inequality to the existence bound $|Z|\le\tfrac12$}{}}
Recall that for steady states with $Q\neq 0$ we obtained
\begin{equation}
Z(s) \;=\; -\,\frac{\beta}{8\lambda^2}\,
\frac{(1+s)^2+\kappa^2}{\,1+s\,}
\qquad\text{with}\qquad
\beta=\frac{\omega_a}{\omega_c}.
\end{equation}
Let $t:=1+s$. For the cavity Kerr case with $U>0$ we have $t\ge 1$. Then
\begin{equation}
|Z(t)| \;=\; \frac{\beta}{8\lambda^2}\,g(t),
\qquad
g(t):=t+\frac{\kappa^2}{t},\qquad t\ge 1.
\end{equation}

A simple calculus check gives the minimum of $g(t)$ on the domain $t\ge 1$:
\begin{equation}
g_{\min} \;=\;
\min_{t\ge 1}\bigg(t+\frac{\kappa^2}{t}\bigg)
\;=\;
\begin{cases}
1+\kappa^2, & \text{if } \kappa\le 1,\\[2pt]
2\kappa, & \text{if } \kappa\ge 1.
\end{cases}
\end{equation}
Therefore, a \emph{necessary and sufficient} condition for the existence of a physical solution (i.e., for some $t\ge 1$ where $|Z(t)|\le \tfrac12$) is
\begin{equation}
\frac{4\lambda^2}{\omega_a\omega_c} \;\ge\; g_{\min}.
\label{eq:Z-existence}
\end{equation}

Now assume we are on the superradiant (SR) side of the normal-phase threshold, i.e.
\begin{equation}
\frac{4\lambda^2}{\omega_a\omega_c} \;>\; 1+\Big(\frac{\kappa}{\omega_c}\Big)^2.
\label{eq:SR-side}
\end{equation}
Since $1+x^2\ge 2x$ for any $x\ge 0$, the SR-side inequality immediately implies that the existence condition in Eq.~\eqref{eq:Z-existence} is satisfied for both cases of $\kappa$:
\begin{equation}
\frac{4\lambda^2}{\omega_a\omega_c} \;>\; 1+\Big(\frac{\kappa}{\omega_c}\Big)^2 \;\ge\; 2\frac{\kappa}{\omega_c}.
\end{equation}
This confirms that any solution on the SR-side is guaranteed to be physical ($|Z|\le\tfrac12$). A direct consequence is the inequality
\begin{equation}
\boxed{\;\kappa \;<\; \frac{2\lambda^2}{\omega_a}\; }.
\end{equation}

\noindent\emph{Remark.} The normal-phase (NP) \emph{stability} condition is the opposite inequality,
$\;\frac{4\lambda^2}{\omega_a\omega_c}<1+\big(\frac{\kappa}{\omega_c}\big)^2$.
It does \emph{not} by itself imply the existence bound $|Z|\le \tfrac12$; the
implication above uses the SR-side inequality \eqref{eq:SR-side}.

\section{Stability Conditions}
To investigate the stability of the system, we consider small fluctuations around the mean-field quantities and ensure that these fluctuations decay over long times. In other words, for an operator ${\cal O}$, we write $\left\langle {\cal O}\right\rangle\equiv {\cal O}= {\cal O}_0+ \delta {\cal O}$, and $\delta {\cal O}$ should vanish for large $t$.
We start from the mean-field steady-state equations (\ref{SS}) to derive the fluctuations. Although there are five quantities—$X, Y, Z, P$, and $Q$—the angular momentum conservation $X^2+Y^2+Z^2=\ft14$ implies
\begin{equation}
X \delta X+Y \delta Y+Z \delta Z=0\,,
\end{equation}
and since $Y=0$, then $\delta Z=-\frac{X}{Z} \delta X$. Therefore, there are actually four fluctuations to consider. Alternatively, one may derive the stability matrix without using spin conservation. It reads
\begin{align}
\frac{d}{d t} &\left(\begin{array}{l} \label{stab.matrix2}
\delta Q \\
\delta P \\
\delta X \\
\delta Y \\
\delta Z
\end{array}\right) = \\
&\left(\begin{array}{ccccc}
-\kappa+2 K N P Q & \omega_c+K N\left(Q^2+3 P^2\right) & 0 & 0 & 0 \\
-\omega_c - K N\left(P^2+3 Q^2\right) & -\kappa-2 K N P Q & -2 \sqrt{2} \lambda & 0 & 0 \\
0 & 0 & 0 & -\omega_{a} &0 \\
-2 \sqrt{2} \lambda Z& 0 & \omega_{a} & 0 & -2 \sqrt{2} \lambda Q \\
2 \sqrt{2} \lambda Y& 0 & 0 & 2 \sqrt{2} \lambda Q & 0 \\ \nn
\end{array}\right)
\left(\begin{array}{l}
\delta Q \\
\delta P \\
\delta X \\
\delta Y \\
\delta Z
\end{array}\right)\,. \nn
\end{align}
For all fluctuations to vanish at large $t$, all eigenvalues of the matrix $A$ must have negative real parts. To find the eigenvalues, we compute $\operatorname{det}(A-\eta I)$, where $\eta$ denotes the eigenvalues. It reads
\begin{align}
\operatorname{det}(A-\eta I) =&\eta^4+2 \kappa \eta^3 \nn\\
&+\Big[\left(3 KN\left(Q^2+P^2\right)+\omega_c\right)\left(KN\left(Q^2+P^2\right)+\omega_c\right)+\kappa^2+\omega_a^2+8 Q^2{\lambda^2}
\Big]\eta^2\nn\\
&+2 \kappa\left(\omega_a^2+8 Q^2 \lambda^2\right) \eta \nn\\
&+\Big[\left(3 KN\left(Q^2+P^2\right)+\omega_c\right)\left(KN\left(Q^2+P^2\right)+\omega_c\right)+\kappa^2\Big]
\left(\omega_a^2+8 Q^2{\lambda^2}\right) \nn\\
&+8\Big[KN\left(Q^2+3 P^2\right)+\omega_c\Big] \lambda^2 \omega_a Z \,. \label{det}
\end{align}
There are two methods to assess stability. One is to find all eigenvalues $\eta$ and require $\Re(\eta)<0$. However, this approach is not very insightful. Instead, the Routh-Hurwitz stability criterion is more suitable.
\subsection{Routh criterion}
Consider a fourth-order polynomial as follows:
\begin{align}
\eta^4+a \eta^3+b \eta^2+ c \eta+ d \,.
\end{align}
The Routh criterion states that for the real parts of all roots to be negative, the following conditions must hold:
\begin{equation}
a\,,d\,, b-\frac{c}{a}\,, a(b c-a d)-c^2>0\,.
\end{equation}
Let us apply the Routh criterion to the fourth-order polynomial (\ref{det}). Note that $a=2\kappa>0$ implies $c=2 \kappa\left(\omega_a^2+8 Q^2 \lambda^2\right)>0$. Moreover, since $a(b c-a d)-c^2+a^2 d=a c\left(b-\frac{c}{a}\right)$, and given that $a$ and $c$ are positive, requiring $a(b c-a d)-c^2>0$ and $d>0$ ensures $b-\frac{c}{a}>0$. Hence, there are just two additional stability constraints besides $\kappa>0$. Namely,
\begin{align}
d =& \Big[\left(3 K N\left(Q^2+P^2\right)+\omega_c\right)\left(K N\left(Q^2+P^2\right)+\omega_c\right)+\kappa^2\Big]\left(\omega_a^2+8 Q^2 \lambda^2\right) \nn\\
&+8\Big(K N\left(Q^2+3 P^2\right)+\omega_c\Big) \omega_a \lambda^2 Z>0\,, \nn\\
a(b c-a d)-c^2=&-\omega_a \lambda^2 Z\left(K N\left(Q^2+3 P^2\right)+\omega_c\right)>0\,.
\end{align}
It is more illuminating to express the above constraints in terms of $s$ and $Z$:
\begin{align}
\fft{d}{\omega_c^4}=&\frac{\beta^2}{4Z^2}\Big((3 s+1)(s+1)+\kappa^2\Big)
+8 \lambda^2 \beta\,Z\left(s \frac{(s+1)^2+3 \kappa^2}{(s+1)^2+\kappa^2}+1\right) >0\,,
\nn\\
\fft{a(b c-a d)-c^2}{\omega_c^4 \lambda^2}=&-\beta\,Z\left( \fft{s\,\big((s+1)^2+3 \kappa^2\big)}{(s+1)^2+\kappa^2}+1\right)>0\,. \label{stability}
\end{align}
\subsection{Normal phase}
As mentioned earlier, the normal phase is a trivial solution to the steady-state mean-field equation. The stability conditions are straightforward since $s=U \abs{\alpha}^2=0$ and $Z=\pm\ft12$. Thus (\ref{stability}) implies
\begin{align}
\fft{d}{\omega_c^4}=&\beta^2 \big(1+\kappa^2\big)
+8 \lambda^2 \beta\,Z>0\,,
\nn\\
\fft{a(b c-a d)-c^2}{\omega_c^4 \lambda^2}=&-\beta\,Z>0\,, \label{stability-normal}
\end{align}
Restoring $\omega_c$, the second constraint reads $-\fft{\omega_a}{\omega_c}\,Z>0$, so only $Z=-\ft12$ is stable. The first condition, on the other hand, states
\begin{align}
\boxed{\fft{4\lambda^2}{ \omega_a\omega_c}<\Big(1+ \big(\fft{\kappa}{\omega_c}\big)^2 \Big)}
\end{align}
Therefore, while the steady-state equation imposes no constraints on the system parameters $\lambda$, $\omega_a$, $\omega_c$, and $\kappa$, the stability condition does. We have already derived the steady-state equation as in (\ref{f}). Thus, one must solve the fifth-order polynomial (\ref{f}) subject to the two stability constraints (\ref{stability}).
Note that the fifth-order polynomial (\ref{f}) has at most five solutions; however, since $s=U \abs{\alpha}^2$, we are interested in the real solutions. Moreover, for positive (negative) $U$, $s$ is positive (negative), respectively. Finding an analytical expression for the system parameters $\lambda$, $\omega_a$, $\omega_c$, and $\kappa$ that determines whether the system has no, one, or multiple stable solutions is somewhat involved. However, numerical methods can be used to address this.


\section{Landau Theory of the Normal-to-Superradiant Phase Transition}
To analyze the nature of the quantum phase transition, we develop a Landau-type theory grounded in the system's mean-field dynamics. This approach is justified because the transition is characterized by a clear order parameter (the cavity field amplitude $\alpha$), the breaking of a $Z_2$ symmetry, and the presence of a single "soft mode" whose dynamics slow down at the critical point, allowing for a simplified description.

\subsection{Effective Potential and Equation of State}
The most insightful way to derive the Landau expansion is by first constructing an effective potential for the order parameter $n = |\alpha|^2$.

\paragraph{Equilibrium Case ($\kappa = 0$).}
In the absence of dissipation, the system is conservative, and the mean-field steady state corresponds to the minimum of an effective energy function. We set resonance conditions ($\omega_a=\omega_c=1$) for simplicity. By treating the cavity field $\alpha$ as a real classical parameter (by choice of phase) and finding the ground-state energy of the qubit sector in this field, we obtain the mean-field energy per atom:
\begin{align}
E(n)=n+\frac{U}{2}n^2-\frac{1}{2}\sqrt{1+16\lambda^2 n},\qquad n\ge 0.
\end{align}
Here, the terms correspond to the cavity energy, the Kerr nonlinearity, and the minimized spin energy, respectively. The stable steady state is found by minimizing this potential, $\partial_n E(n) = 0$, which yields the equilibrium equation of state (EOS):
\begin{align}
1+U n-\frac{4\lambda^2}{\sqrt{1+16\lambda^2 n}}=0.
\end{align}

\paragraph{Driven-Dissipative Case ($\kappa > 0$).}
In the presence of dissipation, the system reaches a non-equilibrium steady state, and there is no true energy function to minimize. However, the steady-state solution from the full mean-field equations provides an EOS that relates the driving strength $\lambda^2$ to the photon number $n$. From this, we can construct a Lyapunov-like "effective potential" whose minima correspond to the stable steady states.

The central idea is to define a control parameter $\delta$ that measures the deviation from the normal-phase stability threshold, $4\lambda_c^2 = \beta(1+\kappa^2)$. We can express this control parameter as a function of the order parameter $n$:
\begin{equation}
    \delta(n) \equiv 4\lambda^2(n) - \beta(1+\kappa^2)\,.
\end{equation}
For small $n$, this function can be expanded in a Taylor series, which is the Landau normal form for the EOS:
\begin{equation}
    \delta(n) = A(\kappa,\beta,U)\,n + B(\kappa,\beta,U)\,n^2 + C(\kappa,\beta,U)\,n^3 + \mathcal{O}(n^4)\,.
\end{equation}
The effective potential can then be formally defined as $\Phi(n) = \int_0^n \delta(s) ds$, and its minima correctly identify the stable photon numbers.

\subsection{Landau coefficients, cubic term, and stability}

A systematic expansion of the full mean-field steady-state equations allows for the exact calculation of the Landau coefficients. They are as follows
\begin{align}
A(\kappa,\beta,U)&=2\,(1+\kappa^2)-\beta\,U\,(\kappa^2-1)\,, \nn\\
B(\kappa,\beta,U)&=\beta\,\kappa^2\,U^2+4\,U+\frac{2(1+\kappa^2)}{\beta}\,, \nn\\
C(\kappa,\beta,U)&=\frac{U}{\beta}\Big(-\beta^2\kappa^2 U^2+2\beta U+2\kappa^2+6\Big)\,.
\end{align}
The sign of the leading coefficient $A$ determines the nature of the transition: $A>0$ corresponds to a continuous (second-order) onset of the superradiant phase, while $A<0$ signals a discontinuous (first-order) transition. This conclusion is directly consistent with the dynamical stability analysis governed by the Routh–Hurwitz criteria derived earlier.

On resonance ($\beta=1$) the coefficients simplify to
\begin{align}
A&=2(1+\kappa^2)-U(\kappa^2-1)\,, \nn\\
B&=\kappa^2U^2+4U+2(1+\kappa^2)\,, \nn\\
C&=-\kappa^2U^3+2U^2+(2\kappa^2+6)U\,. \label{eq:A,B,C}
\end{align}

\paragraph{Connection to stability analysis.}
The stability of a superradiant solution is determined by the Routh–Hurwitz criterion $S_1>0$. Substituting the steady-state solution for $\lambda^2(n)$ into the full expression for $S_1$ yields 
\begin{align}
\frac{S_1}{n}= & \left[(1+Un)^2+\kappa^2\right]\,
\left[8n(1+Un)^2+4(1+Un)\sqrt{1+4n^2(1+Un)^2}\right] \nn\\
& +2 U(1+Un)\left[8n^2(1+Un)^2+1+4n(1+Un)\sqrt{1+4n^2(1+Un)^2}\right] \nn\\
& -\frac{2 \kappa^2 U}{1+Un}\, .
\end{align}
where we have assumed $\beta=1$. Expanding this expression for small $n$ gives
\begin{equation}
\frac{S_1}{n} = c_0 + c_1 n + c_2 n^2 + \mathcal{O}(n^3)\,,
\end{equation}
where
\begin{align}
    c_0 &= 4(1+\kappa^2) + 2U(1-\kappa^2)\,, \nn\\
    c_1 &= 8(1+\kappa^2) + 20U + 4U\kappa^2 + 2U^2(1+\kappa^2)\,, \nn\\
    c_2 &= 8(1+\kappa^2) + 48U + 16U\kappa^2 + 28U^2 - 2U^3\kappa^2\,.
\end{align}
The leading-order term $c_0$ is directly proportional to $A$, i.e.,
\begin{align}
    c_0 =2A,
\end{align}
linking the Landau expansion to the physical stability of the system. For $A>0$, the emerging solution is stable ($S_1>0$), corresponding to a continuous (second-order) supercritical bifurcation. Conversely, for $A<0$, the solution is unstable ($S_1<0$), corresponding to a discontinuous (first-order) subcritical bifurcation. The line separating these regimes is therefore found by setting $A=0$, namely
\begin{align}
U_c(\kappa)=\frac{2(\kappa^2+1)}{\kappa^2-1}\,. \label{eq:Uc}
\end{align}

\subsection{The Tricritical Point}

A tricritical point (TCP) is a special, higher-order critical point where a line of second-order phase transitions meets a line of first-order phase transitions. At the TCP, the character of the bifurcation itself changes from supercritical to subcritical. This occurs when the first two Landau coefficients vanish simultaneously, i.e.\ $A=0$ and $B=0$.

It is important to distinguish a \textit{tricritical} point from a \textit{triple} point. A triple point is where three distinct thermodynamic phases (e.g., solid, liquid, gas) coexist in equilibrium. A tricritical point, by contrast, does not involve a third phase; it is the point where the nature of a single phase transition changes.

Substituting the condition $U=U_c(\kappa)$ from Eq.~\eqref{eq:Uc} into the expression for $B$ from Eq.~(\ref{eq:A,B,C}), we find its value along the order-change line:
\begin{equation}
B\big|_{A=0} = \frac{2(\kappa^2+1)\,\big(3\kappa^4+4\kappa^2-3\big)}{(\kappa^2-1)^2}
\end{equation}
This leads to a condition on $\kappa$ alone:
\begin{equation}
3\kappa_t^4 + 4\kappa_t^2 - 3 = 0 
\quad\Rightarrow\quad 
\kappa_t^2 = \frac{\sqrt{13}-2}{3} \approx 0.535\,.
\end{equation}
This gives a TCP at $\kappa_t \approx 0.732$ with $U_t = U_c(\kappa_t) \approx -6.61$. At this point, the EOS is dominated by the cubic term, $\delta \propto n^3$, leading to a unique critical scaling for the order parameter, $|\alpha| \propto \delta^{1/6}$. Moreover, since $C(\kappa_t,U_t)>0$ (numerically $C\simeq1.95\times10^2$), the quartic Landau potential remains stable at tricriticality.

\subsection{Hysteresis and Spinodals in the First-Order Regime}
For $A<0$ and $C>0$, the transition is first order with bistability and hysteresis. Stationary branches satisfy the equation of state
\[
\delta_{\mathrm{ext}}=\delta(n), \qquad 
\delta(n)\equiv A n + B n^2 + C n^3, \qquad 
\delta_{\mathrm{ext}}\equiv 4\lambda^2-(1+\kappa^2).
\]
Spinodal points are stationary and marginal, given by $\delta_{\mathrm{ext}}=\delta(n)$ together with
\begin{align}
\delta'(n)=A+2Bn+3Cn^2=0,
\end{align}
which yields (for $B^2>3AC$)
\begin{align}
n_{\pm}=\frac{-B\pm\sqrt{B^2-3AC}}{3C}.
\end{align}
The normal phase loses stability at the up–spinodal $\delta_{\uparrow}=\delta(0)=0$, while the superradiant branch loses stability at the down–spinodal $\delta_{\downarrow}=\delta(n_{-})<0$. Hence the bistable window is $\delta_{\mathrm{ext}}\in[\delta_{\downarrow},\,0]$.

Within this window, coexistence occurs when the two stable minima have equal effective potential. The effective potential is defined as
\begin{align}
\Phi(n;\delta_{\mathrm{ext}}) 
= \int_0^n \delta(s)\,ds - \delta_{\mathrm{ext}}\,n
= \frac{A}{2}n^2 + \frac{B}{3}n^3 + \frac{C}{4}n^4 - \delta_{\mathrm{ext}}\,n.
\end{align}
Stationarity enforces $\delta_{\mathrm{ext}}=\delta(n)$. The coexistence condition then proceeds in steps:

\begin{enumerate}
    \item \textbf{Stationarity (Equation of State):}  
    At equilibrium,
    \[
    \frac{\partial \Phi}{\partial n} = 0 \quad\Rightarrow\quad \delta_{\mathrm{ext}}=\delta(n).
    \]

    \item \textbf{Set the Coexistence Condition:}  
    At coexistence, the effective potentials of the two stable states are equal,
    \[
    \Phi(n_{\mathrm{coex}};\delta_{\mathrm{ext}})=\Phi(0;\delta_{\mathrm{ext}}).
    \]
    Since $\Phi(0;\delta_{\mathrm{ext}})=0$, this reduces to
    \begin{align}
    \frac{A}{2}n_{\mathrm{coex}}^2 + \frac{B}{3}n_{\mathrm{coex}}^3 + \frac{C}{4}n_{\mathrm{coex}}^4 
    - n_{\mathrm{coex}}\big(A n_{\mathrm{coex}} + B n_{\mathrm{coex}}^2 + C n_{\mathrm{coex}}^3\big)=0.
    \end{align}

    \item \textbf{Simplify the Condition:}  
    Dividing by $n_{\mathrm{coex}}^2$ (for $n_{\mathrm{coex}}>0$) yields the  coexistence equation
    \begin{align}
    \frac{A}{2}+\frac{2B}{3}\,n_{\mathrm{coex}}+\frac{3C}{4}\,n_{\mathrm{coex}}^{2}=0.
    \end{align}
\end{enumerate}

The nontrivial solution gives $n_{\mathrm{coex}}>0$, with $\delta_{\mathrm{coex}}=\delta(n_{\mathrm{coex}})$. In quasi-static sweeps, the jump in photon number occurs at the spinodals ($\delta_{\downarrow}$ on the downward sweep, $0$ on the upward sweep), not at $\delta_{\mathrm{coex}}$.


\section{Quantum fluctuations in the normal phase (Lyapunov method)}

\paragraph{Setup and linearization.}
We linearize around the normal-phase (NP) fixed point
\[
(Q_0,P_0,X_0,Y_0)=(0,0,0,0),\qquad Z_0=-\tfrac12.
\]
Define $\delta\bm v=(\delta Q,\delta P,\delta X,\delta Y)^{T}$ and write
\begin{equation}
\frac{d}{dt}\,\delta\bm v(t)=A\,\delta\bm v(t)+\bm\eta(t),
\qquad
A=\begin{pmatrix}A_1&A_2\\ A_3&A_4\end{pmatrix},
\end{equation}
with
\begin{align}
A_1=\begin{pmatrix}-\kappa&\omega_c\\ -\omega_c&-\kappa\end{pmatrix},\quad
A_2=\begin{pmatrix}0&0\\ -2\sqrt2\,\lambda&0\end{pmatrix},\quad
A_3=\begin{pmatrix}0&0\\ \sqrt2\,\lambda&0\end{pmatrix},\quad
A_4=\begin{pmatrix}0&-\omega_a\\ \omega_a&0\end{pmatrix}.
\end{align}

\paragraph{Noise normalization.}
Using $\langle a\rangle=\sqrt{N}\,\alpha$, only the cavity is driven by vacuum noise,
\begin{equation}
D=\frac{\kappa}{N}\begin{pmatrix}I_2&0\\[2pt]0&0\end{pmatrix},
\qquad
\langle \eta_i(t)\eta_j(t')+\eta_j(t')\eta_i(t)\rangle=2D_{ij}\,\delta(t-t').
\end{equation}
At $T=0$ atomic noise is absent in this linearized NP treatment.

\paragraph{Lyapunov equation.}
The steady-state covariance $V=\langle \delta\bm v\,\delta\bm v^{T}\rangle$ satisfies
\begin{equation}
A V + V A^{T}=-D.
\label{eq:Lyap}
\end{equation}
Partition $V$ conformally as
\(
V=\begin{pmatrix}V_1&V_2\\ V_2^{T}&V_4\end{pmatrix}
\)
with $V_1^{T}=V_1$, $V_4^{T}=V_4$. Equation \eqref{eq:Lyap} is equivalent to
\begin{align}
& A_1V_1+V_1A_1^{T}+A_2V_2^{T}+V_2A_2^{T}=-\frac{\kappa}{N}I_2, \label{eq:b1}\\
& A_1V_2+V_2A_4^{T}+A_2V_4+V_1A_3^{T}=0, \label{eq:b2}\\
& A_3V_1+A_4V_2^{T}+V_2^{T}A_1^{T}+V_4A_2^{T}=0, \label{eq:b3}\\
& A_4V_4+V_4A_4^{T}+A_3V_2+V_2^{T}A_3^{T}=0. \label{eq:b4}
\end{align}

\paragraph{Block-by-block solution (intermediate steps).}
Parameterize
\[
V_1=\begin{pmatrix}r&s\\ s&u\end{pmatrix},\qquad
V_2=\begin{pmatrix}a&b\\ c&d\end{pmatrix},\qquad
V_4=\begin{pmatrix}\eta&\chi\\ \chi&\xi\end{pmatrix}.
\]

\noindent\textit{Step 1 (from \eqref{eq:b4}).} 
We begin with the equation for the atomic block, $A_4V_4 + V_4A_4^T + A_3V_2 + V_2^T A_3^T = 0$. Using the anti-symmetry of the atomic evolution matrix, $A_4^T = -A_4$, this becomes $A_4V_4 - V_4A_4 + A_3V_2 + V_2^TA_3^T = 0$. We compute the two terms separately:
\begin{align*}
A_4V_4 - V_4A_4 &= \begin{pmatrix} -2\omega_a\chi & \omega_a(\eta-\xi) \\ \omega_a(\eta-\xi) & 2\omega_a\chi \end{pmatrix} \\
A_3V_2 + V_2^TA_3^T &= \begin{pmatrix} 0 & 0 \\ \sqrt{2}\lambda a & \sqrt{2}\lambda b \end{pmatrix} + \begin{pmatrix} 0 & \sqrt{2}\lambda a \\ 0 & \sqrt{2}\lambda b \end{pmatrix}^T = \begin{pmatrix} 0 & \sqrt{2}\lambda a \\ \sqrt{2}\lambda a & 2\sqrt{2}\lambda b \end{pmatrix}
\end{align*}
Summing these and setting the result to the zero matrix gives the component equations:
\begin{align}
\text{(1,1):}\quad & -2\omega_a\chi = 0 &&\Rightarrow \quad \chi=0 \\
\text{(2,2):}\quad & 2\omega_a\chi + 2\sqrt{2}\lambda b = 0 &&\Rightarrow \quad b=0 \\
\text{(1,2):}\quad & \omega_a(\eta-\xi)+\sqrt{2}\lambda a = 0 &&\Rightarrow \quad \omega_a(\eta-\xi) = -\sqrt{2}\lambda a
\end{align}
This first step significantly simplifies the covariance matrices:
\[
V_4=\begin{pmatrix}\eta&0\\ 0&\xi\end{pmatrix}, \qquad
V_2=\begin{pmatrix}a&0\\ c&d\end{pmatrix}.
\]

\noindent\textit{Step 2 (from \eqref{eq:b2}).}
The matrix equation $A_1V_2 + V_2A_4^T + A_2V_4 + V_1A_3^T = 0$ yields four scalar equations.
\begin{align}
\text{(1,1):}\quad & -\kappa a + \omega_c c = 0 \label{eq:S2_11}\\
\text{(1,2):}\quad & \omega_c d + a\omega_a + r\sqrt{2}\lambda = 0 \label{eq:S2_12}\\
\text{(2,1):}\quad & -\omega_c a - \kappa c - d\omega_a - 2\sqrt{2}\lambda\eta = 0 \label{eq:S2_21}\\
\text{(2,2):}\quad & -\kappa d + c\omega_a + s\sqrt{2}\lambda = 0 \label{eq:S2_22}
\end{align}
From \eqref{eq:S2_11}, we immediately find the relation between $a$ and $c$:
\begin{equation}
c = \frac{\kappa}{\omega_c}a\,. \label{eq:c_of_a}
\end{equation}

\noindent\textit{Step 3 (from \eqref{eq:b1}).}
The matrix equation $A_1V_1+V_1A_1^{T}+A_2V_2^{T}+V_2A_2^{T}=-\frac{\kappa}{N}I_2$ gives three distinct scalar equations.
\begin{align}
\text{(1,1):}\quad & -2\kappa r+2\omega_c s=-\frac{\kappa}{N} \label{eq:S1_11}\\
\text{(2,2):}\quad & -2\kappa u-2\omega_c s-4\sqrt{2}\lambda c=-\frac{\kappa}{N} \label{eq:S1_22}\\
\text{(1,2):}\quad & \omega_c(u-r)-2\kappa s-2\sqrt{2}\lambda a=0 \label{eq:S1_12}
\end{align}
Adding equations \eqref{eq:S1_11} and \eqref{eq:S1_22} gives:
\[
-2\kappa(r+u)-4\sqrt{2}\lambda c = -\frac{2\kappa}{N}\,.
\]
Solving for the sum of the diagonal elements of $V_1$ yields the correct expression:
\begin{equation}
r+u = \frac{1}{N} - \frac{2\sqrt{2}\lambda}{\kappa} c\,. \label{eq:r_plus_u_corrected}
\end{equation}
Using \eqref{eq:c_of_a}, this can be expressed in terms of $a$:
\begin{equation}
r+u = \frac{1}{N} - \frac{2\sqrt{2}\lambda}{\omega_c} a\,.  \label{eq:r_plus_u}
\end{equation}

\noindent\textit{Step 4 (Closing the equations and solving for $a$).}
The previous steps have generated a system of linear equations relating the elements of the covariance matrix. The final task is to solve this system for the key cross-correlation term $a = \langle \delta Q \delta X \rangle$. This involves a systematic but lengthy algebraic elimination of the other variables ($s, u, c, d, \eta, \xi$) from the equations derived in Steps 1-3. After performing this elimination, one arrives at the solution for $a$:
\begin{equation}
a = \frac{1}{2\sqrt2 N}\; \frac{\lambda(\kappa^2+\omega_c^2)}{4\lambda^2\omega_c - \omega_a(\kappa^2+\omega_c^2)}\,. \label{eq:a-sol}
\end{equation}

\paragraph{Cavity fluctuation and final expression.}
The photon-number fluctuation in $\alpha$ is read from $V_1$ as
\begin{equation}
\big\langle\delta\alpha^\dagger\delta\alpha\big\rangle
=\frac12\!\left(\langle\delta Q^{2}\rangle+\langle\delta P^{2}\rangle\right)-\frac{1}{2N}
=\frac12\,(r+u)-\frac{1}{2N}=-\frac{\sqrt{2}\lambda}{\omega_c} a \label{eq:fluc}
\end{equation}
Using \eqref{eq:a-sol} and \eqref{eq:fluc} we obtain
\begin{empheq}[box=\fbox]{equation}
\label{eq:deltaa-final}
\big\langle \delta\alpha^\dagger \delta\alpha \big\rangle
=\frac{1}{2N}\,
\frac{\lambda^{2}}{\displaystyle \omega_a\omega_c-\frac{4\lambda^{2}}{1+\kappa^{2}/\omega_c^{2}}}\,.
\end{empheq}

\paragraph{Threshold and consistency.}
The denominator of \eqref{eq:deltaa-final} vanishes at
\begin{equation}
\boxed{\,4\lambda^{2}=\omega_a\omega_c\!\left(1+\frac{\kappa^{2}}{\omega_c^{2}}\right)\, ,}
\end{equation}
which is exactly the Routh–Hurwitz stability boundary for the NP. The divergence of
$\langle \delta\alpha^\dagger\delta\alpha\rangle$ at this point signals the second-order NP\,$\to$\,SR transition. In the lossless limit $\kappa\to0$,
\(
\langle \delta\alpha^\dagger\delta\alpha\rangle
=\frac{1}{2N}\frac{\lambda^{2}}{\omega_a\omega_c-4\lambda^{2}}
\),
as expected.

\end{document}